\newcommand{\beqa}{\begin{eqnarray*}}
\newcommand{\eeqa}{\end{eqnarray*}}
\newcommand{\beqn}{\begin{eqnarray}}
\newcommand{\eeqn}{\end{eqnarray}}
\newcommand{\iy}{\infty}
\newcommand{\lt}{\left}
\newcommand{\rt}{\right}
\newcommand{\C}{\mathbb C}
\newcommand{\R}{\mathbb R}
\newcommand{\N}{\mathbb N}
\newcommand{\mcH}{\mathcal H}
\newcommand{\mcB}{\mathcal B}
\newcommand{\mcC}{\mathcal C}
\newcommand{\tf}{\tfrac}
\newcommand{\e}{\varepsilon}
\newcommand{\De}{\Delta}
\newcommand{\la}{\lambda}
\newcommand{\Om}{\Omega}
\newcommand{\s}{\sigma}
\newcounter{cnt1}
\newcounter{cnt2}
\newcounter{cnt3}
\newcommand{\blr}{\begin{list}{$($\roman{cnt1}$)$}
 {\usecounter{cnt1} \setlength{\topsep}{0pt}
 \setlength{\itemsep}{0pt}}}
\newcommand{\bla}{\begin{list}{$($\alph{cnt2}$)$}
 {\usecounter{cnt2} \setlength{\topsep}{0pt}
 \setlength{\itemsep}{0pt}}}
\newcommand{\bln}{\begin{list}{$($\arabic{cnt3}$)$}
 {\usecounter{cnt3} \setlength{\topsep}{0pt}
 \setlength{\itemsep}{0pt}}}
\newcommand{\el}{\end{list}}
\newtheorem{thm}{Theorem}[section]
\newtheorem{cor}[thm]{Corollary}
\newtheorem{ex}[thm]{Example}
\newtheorem{Def}[thm]{Definition}
\newtheorem{rem}[thm]{Remark}
\newcommand{\Rem}{\begin{rem} \rm}
\newcommand{\bdfn}{\begin{Def} \rm}
\newcommand{\edfn}{\end{Def}}
\newcommand{\ba}{\begin{array}}
\newcommand{\ea}{\end{array}}
\numberwithin{equation}{section}
\date{}
\begin{document}
\title{\bf{Note on the  Spectral Theorem }}
\author[Gill]{T. L. Gill}
\address[Tepper L. Gill]{ Departments of Mathematics, Physics, and Electrical \& Computer Engineering, Howard University\\
Washington DC 20059 \\ USA, {\it E-mail~:} {\tt tgill@howard.edu}}
\author[Williams]{D.  Williams}
\address[Daniel Williams]{ Department of Mathematics \\ Howard
University\\ Washington DC 20059 \\ USA, {\it E-mail~:} {\tt
dwilliams@howard.edu}}
\date{}
\subjclass{Primary (46B03), (47D03) Secondary(47H06), (47F05) (35Q80)}
\keywords{spectral theorem, vector measures,  vector-valued functions, Reflexive Banach spaces}
\maketitle
\begin{abstract}    In this note, we show that the spectral theorem,  has two representations;  the Stone-von Neumann representation and one based on the polar decomposition of linear operators, which we call the deformed representation.  The deformed representation has the advantage that it provides an easy extension to all closed densely defined linear operators on Hilbert space.  Furthermore, the deformed representation can also be extended all separable reflexive  Banach spaces and has a limited extension to non-reflexive Banach spaces. 
\end{abstract}
\section*{Introduction}
Let $\mcC[\mcB]$ be the closed densely defined linear operators on a Banach space.  By definition an operator $A$, defined on a separable Banach space $\mcB$ is of Baire class one if it can be approximated by a sequence $\{A_n \} \subset L[\mcB]$, of bounded linear operators. If $\mcB$ is a Hilbert space, then every $A \in \mcC[\mcB]$ is of Baire class one.  However, it turns out  that, if $\mcB$ is not a Hilbert space, there may be operators $A \in \mcC[\mcB]$ that are not of Baire class one.  

A Banach space $\mcB$ is said to be quasi-reflexive if $ \dim \{ {\mcB}'' /{\mcB}\}  < \infty $, and  nonquasi-reflexive if $ \dim \{ {\mcB}'' /{\mcB}\} = \infty $.  Vinokurov, Petunin and Pliczko \cite{VPP} have shown that, for every nonquasi-reflexive Banach space $\mcB$, there is a closed densely defined linear operator $A$ which is not of Baire class one (for example, $C[0,1]$ or $L^1[{\R}^n], \; n \in \N$).  It can even be arranged so that $A^{-1}$ is a bounded linear injective operator (with dense range).  This means that there does not exist a sequence of bounded linear operators $A_{n} \in L[\mcB]$, with $A_n \phi \rightarrow A\phi$ for each $A \in \mcC[\mcB]$ and $\phi \in D(A)$.  

In \cite{GSZ}, we were able to show that if $\mcB$ is one of the classical Banach spaces,  then each bounded linear operator  has an adjoint and a spectral type representation, which we called the extended representation.   In addition, we were able to show that an operator has a adjoint if and only if it is of Baire class one.  At that time, it was not clear if our results could be extended to the closed densely defined linear operators on any of the classical reflexive Banach spaces.  However, we were able to show that every generator of a $C_0$-semigroup on any classical space has an adjoint (is of Baire class one).  We were later informed by Professor Pliczko that the results of their paper automatically implies that the set of closed densely defined linear operators on every separable reflexive (or quasi-reflexive) Banach is of Baire class one.
\section*{Purpose}
In the first section of this paper we show how the polar decomposition property of operators leads to a new type of (deformed) spectral representation for linear operators, which easily extends to all linear operators in ${\mcC}[{\mcH}]$.   We then show directly that, for any separable reflexive Banach space,  every closed densely defined linear operator has an adjoint. We use this result to prove  that the closed densely defined linear operators are of Baire class one.   Finally, we show that an almost identical (deformed) spectral representation holds if we replace $\mcH$ by a reflexive Banach space.
\section*{Preliminaries} 
If $\mcH$ is a Hilbert space and $A$ is any selfadjoint operator in ${\mcC}[{\mcH}]$, the following direct spectral representation theorem is well-known (see \cite{DS}, page 1192-99). 
\begin{thm} Let $A \in {\mcC}[{\mcH}]$ be a selfadjoint operator. Then its spectrum $\s(A) \subset \R$ and there exists a unique regular countably additive  projection-valued spectral  measure ${\bf{E}}(\cdot)$ mapping the Borel sets, ${\mathfrak{B}}[\R]$, over $\R$ into $\mcH$ such that
\begin{enumerate}
\item $D(A)$ satisfies
\[
D(A) = \left\{ {\left. {\phi \in \mathcal{H}} \ \right|\;\int_{-\iy}^\iy {\lambda ^2 \left( {{\bf{E}}(d\lambda )\phi,\phi} \right)_\mathcal{H}  < \infty } } \right\}
\]
and 
\[
A\phi =  \int_{ - \iy}^\iy {\lambda {\bf{E}}(d\lambda )\phi}, \; {\rm for \; each } \; \phi \in D(A). 
\]
\item If $g(\cdot)$ is a complex-valued Borel function defined (a.e) on $\R$, then $g(A)\in {\mcC}[{\mcH}]$ and, for  $ \phi \in D(g(A))= D_g(A)$,  
\[
g(A)\phi=  \int_{ - \iy}^\iy {g(\lambda) {\bf{E}}(d\lambda )\phi},
\]
 where
\[
D_g(A) = \left\{ {\left. {\phi \in \mathcal{H}} \ \right|\;\int_{-\iy}^\iy {\lt|g(\lambda)\rt|^2 \left( {{\bf{E}}(d\lambda )\phi,\phi} \right)_\mathcal{H}  < \infty } } \right\}
\]
and $g(A^*)= {\bar g}(A)$.
\end{enumerate}
\end{thm}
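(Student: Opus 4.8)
The plan is to reduce the unbounded case to the spectral theorem for bounded (here, unitary) operators via the Cayley transform, and then transport the resulting projection-valued measure from the unit circle back to $\R$. First I would record $\s(A)\subset\R$: writing $\m=a+bi$ with $b\neq 0$, the identity $\|(A-\m)\psi\|^2=\|(A-a)\psi\|^2+b^2\|\psi\|^2\geq b^2\|\psi\|^2$ for $\psi\in D(A)$ (which uses that $A-a$ is symmetric) shows $A-\m$ is injective with closed range, and since $A=A^*$ that range has orthogonal complement $\ker(A-\bar\m)=\{0\}$; hence $A-\m$ maps $D(A)$ bijectively onto $\mcH$ and $\m\in\rho(A)$. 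In particular $A\pm iI$ are bijections, so the Cayley transform $U=(A-iI)(A+iI)^{-1}$ is a well-defined bounded operator; the same identity with $a=0$ shows $U$ is a surjective isometry, i.e. unitary, while $U\phi=\phi$ forces $\phi=0$, so $1$ is not an eigenvalue of $U$.

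Next I would invoke the spectral theorem for the unitary operator $U$ --- itself obtained from the continuous functional calculus on $C(\s(U))$ together with the Riesz representation theorem applied to the positive functionals $f\mapsto(f(U)\phi,\phi)_{\mcH}$, then extended to bounded Borel $f$ by bounded convergence --- which yields a unique regular countably additive projection-valued measure $\mathbf{F}(\cdot)$ on the Borel subsets of the unit circle $\mathbb{T}$ with $U=\int_{\mathbb{T}}z\,\mathbf{F}(dz)$. Since $1$ is not an eigenvalue, $\mathbf{F}(\{1\})=0$, so the Cayley map $z\mapsto\la=i(1+z)/(1-z)$, a homeomorphism of $\mathbb{T}\setminus\{1\}$ onto $\R$, pushes $\mathbf{F}$ forward to a regular countably additive projection-valued measure $\mathbf{E}$ on $\mathfrak{B}[\R]$, $\mathbf{E}(S)=\mathbf{F}\big(\{z:i(1+z)/(1-z)\in S\}\big)$, with $\mathbf{E}(\R)=I$. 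Running the inverse Cayley identity $A=i(I+U)(I-U)^{-1}$ through the functional calculus for $U$ then gives
\[ D(A)=\Big\{\phi\in\mcH:\int_{-\iy}^{\iy}\la^2\,(\mathbf{E}(d\la)\phi,\phi)_{\mcH}<\iy\Big\},\qquad A\phi=\int_{-\iy}^{\iy}\la\,\mathbf{E}(d\la)\phi, \]
the finiteness condition being exactly $\int_{\mathbb{T}}|1-z|^{-2}(\mathbf{F}(dz)\phi,\phi)_{\mcH}<\iy$, i.e. $\phi\in\mathrm{range}(I-U)=D(A)$.

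For part (2) I would set $g(A)\phi:=\int_{-\iy}^{\iy}g(\la)\,\mathbf{E}(d\la)\phi$ on $D_g(A)$, as displayed. Density of $D_g(A)$ holds because $\bigcup_n\mathbf{E}([-n,n])\mcH$ is dense and contained in $D_g(A)$; closedness of $g(A)$ follows from Fatou's lemma applied to the finite scalar measures $S\mapsto(\mathbf{E}(S)\phi,\phi)_{\mcH}$; and $g(A^*)=\bar g(A)$ reduces, since $A^*=A$, to the identity $\big(\int g\,d\mathbf{E}\big)^*=\int\bar g\,d\mathbf{E}$, verified first on simple functions and then by passage to the limit. Uniqueness of $\mathbf{E}$ follows from Stone's formula: for $a<b$ and $\phi\in\mcH$,
\[ \tf12\big(\mathbf{E}([a,b])+\mathbf{E}((a,b))\big)\phi=\lim_{\e\downarrow 0}\f{1}{2\pi i}\int_a^b\big[(A-\la-i\e)^{-1}-(A-\la+i\e)^{-1}\big]\phi\,d\la, \]
the limit taken in norm, so $\mathbf{E}$ on intervals --- hence everywhere --- is determined by the resolvent of $A$, that is, by $A$ itself.

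I expect the main obstacle to be the bookkeeping forced by unboundedness: confirming that the inverse Cayley transform recaptures $D(A)$ exactly and not merely a dense subspace of it, that $\mathbf{F}(\{1\})=0$ is precisely what makes the push-forward a projection-valued measure on all of $\R$ with $\mathbf{E}(\R)=I$, and that $g(A)$ defined by the integral is genuinely closed with the asserted domain $D_g(A)$. The circle-to-line transport and Stone's formula are the two technical hinges; the remainder is the routine extension of a continuous functional calculus, first to bounded Borel functions and then, via unbounded multiplication operators, to arbitrary Borel functions.
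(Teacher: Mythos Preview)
Your proposal is a correct and well-organized sketch of the Cayley-transform proof of the spectral theorem for unbounded selfadjoint operators, but there is nothing in the paper to compare it with: the paper does not prove this theorem. It is stated in the Preliminaries section as a known result, with the citation ``see [DS], page 1192--99'' (Dunford--Schwartz, \emph{Linear Operators Part II}), and is then used as input for the paper's own Theorem~1.2 on the deformed spectral representation.

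For what it is worth, the Dunford--Schwartz development you are being pointed to does not proceed via the Cayley transform either; it builds the resolution of the identity for a bounded selfadjoint operator directly (via the continuous functional calculus and the Riesz representation theorem) and then handles the unbounded case by reducing to bounded selfadjoint pieces, essentially through the resolvent or through the operators $A\mathbf{E}([-n,n])$. Your route through the unitary $U=(A-iI)(A+iI)^{-1}$ is the one favored by Reed--Simon and Rudin, both of which the paper also cites. Either path yields the same $\mathbf{E}$, and your invocation of Stone's formula for uniqueness is exactly what reconciles them. The technical caveats you flag---that $\mathbf{F}(\{1\})=0$ is what guarantees $\mathbf{E}(\R)=I$, and that the inverse Cayley transform recovers $D(A)$ on the nose---are the right ones; they are handled in the references just cited and present no genuine obstacle.
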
 
\begin{rem} 
We call Theorem 0.1 the direct representation because it requires the least additional material compared to the Gelfand representation and the one favored in mathematical physics, based on the position representation (see Rudin \cite {RU} page 306 and Reed and Simon \cite{RS}, page 260).
\end{rem} 
 A standard theorem of von Neumann [VN] shows that on ${\mathcal{H}}$, any closed densely defined linear operator $A$ has a well-defined adjoint $A^{*}$ with $A^{*}A$ nonnegative and selfadjoint.  A  classic result shows that there is a unique partial isometry $U$  such that $A=UT={\bar T}U$ and  $A^*=U^*{\bar T} =TU^* $, where $T=[A^{*}A]^{1/2}, \ {\bar T}=[AA^*]^{1/2}$, and $D(A)=D(T)$  (see Kato \cite{K}, page 334).  The next result can be found in Hille and Phillips \cite{HP} (see page 63). 
\begin{thm} Let ${\bf{G}}(\la)$ be a vector-valued function from $\R$ to $\mcH$ of bounded variation.  If $h(\la)$ is a continuous complex-valued function on $(a,b) \subset \R$,  then the following holds:
\begin{enumerate}
\item The integral $\int_a^b {h(\lambda )} d{\bf{G}}(\lambda )$ exists in the  $\mcH$ norm.

\item  If $T \in \mcC[{\mcH}], \; {\bf G}(\la) \in D(T)$ and $T{\bf{G}}(\lambda )$ is of bounded variation then
\[
T\int_a^b {h(\lambda )} d{\bf G}(\lambda )=\int_a^b {h(\lambda )} dT{\bf G}(\lambda).
\]
\end{enumerate}
\end{thm}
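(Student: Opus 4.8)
The plan is to prove both parts directly from Riemann--Stieltjes sums, invoking completeness of $\mcH$ for (1) and closedness of $T$ for (2); no measure theory is needed.

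For part (1), I would first treat the case in which $h$ extends to a continuous (hence bounded) function on the closed interval $[a,b]$, and recover the general open-interval statement afterwards by exhausting $(a,b)$ with compact subintervals $[a_k,b_k]$ and using that $\mathrm{Var}_{[a,a_k]}({\bf G})+\mathrm{Var}_{[b_k,b]}({\bf G})\to 0$. Given a tagged partition $P:\ a=\la_0<\la_1<\cdots<\la_n=b$ with tags $\xi_i\in[\la_{i-1},\la_i]$, set $S(P)=\sum_{i=1}^n h(\xi_i)\bigl({\bf G}(\la_i)-{\bf G}(\la_{i-1})\bigr)\in\mcH$. The key step is the refinement estimate: if $P'$ refines $P$ and the mesh of $P$ is below the $\de$ furnished by uniform continuity of $h$ (so that $|h(s)-h(t)|<\e$ whenever $|s-t|<\de$), then grouping the blocks of $P'$ inside the subintervals of $P$ and telescoping gives $\|S(P)-S(P')\|\le \e\,\mathrm{Var}_{[a,b]}({\bf G})$. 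Comparing any two partitions of mesh $<\de$ through their common refinement then shows that the $S(P)$ form a Cauchy net in $\mcH$ as the mesh tends to $0$; by completeness the net converges in norm, and its limit is by definition $\int_a^b h(\la)\,d{\bf G}(\la)$.

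For part (2), note that since ${\bf G}(\la)\in D(T)$ for every $\la$ and $D(T)$ is a linear subspace, every sum $S(P)$ lies in $D(T)$, and by linearity $T\,S(P)=\sum_{i=1}^n h(\xi_i)\bigl(T{\bf G}(\la_i)-T{\bf G}(\la_{i-1})\bigr)$ is exactly the Riemann--Stieltjes sum for $h$ against $T{\bf G}$, which is of bounded variation by hypothesis. Take partitions $P_k$ with mesh tending to $0$; applying part (1) once to ${\bf G}$ and once to $T{\bf G}$ yields $S(P_k)\to\psi:=\int_a^b h\,d{\bf G}$ and $T\,S(P_k)\to\chi:=\int_a^b h\,dT{\bf G}$ in the $\mcH$ norm. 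Since $S(P_k)\in D(T)$, $S(P_k)\to\psi$ and $T\,S(P_k)\to\chi$, closedness of $T$ gives $\psi\in D(T)$ and $T\psi=\chi$, which is the asserted identity.

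The bulk of the work is the telescoping/uniform-continuity estimate in part (1); once that is in place, part (2) is a one-line application of the closed graph property, and the reduction from nets to sequences of partitions with vanishing mesh is immediate because $\mcH$ is metric. The only slightly delicate point is part (1) over the open interval when $h$ is unbounded near an endpoint, but in every application of this theorem in the sequel $h$ is bounded and continuous on the interval in question, so the argument above applies verbatim.
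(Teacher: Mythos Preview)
The paper does not actually prove this theorem; it merely cites Hille and Phillips \cite{HP}, page 63, and states the result. Your Riemann--Stieltjes argument---uniform continuity plus a telescoping refinement estimate to get a Cauchy net for part (1), and the closedness of $T$ applied to the approximating sums for part (2)---is exactly the classical proof one finds in that reference, so in substance you are supplying what the paper outsources.

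One caveat worth flagging: your handling of the open interval $(a,b)$ when $h$ is unbounded near an endpoint does not quite work as written. Knowing that $\mathrm{Var}_{[a,a_k]}({\bf G})+\mathrm{Var}_{[b_k,b]}({\bf G})\to 0$ is not by itself enough to force convergence of $\int_{a_k}^{b_k} h\,d{\bf G}$, since the tail contribution is controlled by the product of $\sup|h|$ over the tail with the tail variation, and the former may blow up. In general the integral need not exist under the bare hypotheses stated. You correctly observe that every use of the theorem in the paper has $h$ bounded (indeed $h(\la)=\la$ on $[0,\infty)$ integrated against a spectral measure of compact support after truncation, or against a measure for which the relevant moment is finite), so the compact-interval argument suffices for the applications; but as a proof of the theorem exactly as stated, this gap is real and should be acknowledged as a defect of the statement rather than swept into the proof.
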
   
\section{Hilbert Space Theory}
\begin{Def}If $U$ is a partial isometry and ${\bf E}(\cdot)$ is a positive spectral measure, then we call ${\bf{F}}(\cdot)=U{\bf{E}}(\cdot)$ a deformed spectral measure.
\end{Def}
\begin{thm} Let $A \in {\mcC}[{\mcH}]$ be arbitrary.  Then, for each $\phi \in D(A)$, there exists a deformed spectral measure ${\bf{F}}(\cdot)$ such that:
\begin{enumerate}
\item $D(A)$  satisfies
\[
D(A) = \left\{ {\left. {\phi \in \mathcal{H}} \ \right|\;\int_{0}^\iy {\lambda ^2 \left({d{\bf{F}}(\lambda )\phi,\phi} \right)_\mathcal{H}  < \infty } } \right\}
\]
and
\item 
\[
A\phi =  \int_{ 0}^\iy {\lambda d{\bf{F}}(\lambda )\phi}, \ {\rm for \ all} \; \phi \in D(A). 
\]
\item If $g(\cdot)$ is a complex-valued Borel function defined (a.e) on $\R$, then 
\[
g(A)\phi= \int_{ 0}^\iy {g(\lambda) d{\bf{F}}(\lambda )\phi}  \ {\rm for \ all} \; \phi \in D_g(A),
\]
where
\[
D_g(A) = \left\{ {\left. {\phi \in \mathcal{H}} \ \right|\;\int_{0}^\iy {\lt|g(\lambda)\rt|^2 \left( {d{\bf{F}}(\lambda )\phi,\phi} \right)_\mathcal{H}  < \infty } } \right\}.
\]
\end{enumerate}
\end{thm}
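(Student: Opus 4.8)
The plan is to reduce the statement to the ordinary spectral theorem for a nonnegative selfadjoint operator (Theorem 0.1) via the polar decomposition recalled above. Write $A = UT$ with $T = [A^*A]^{1/2}$ nonnegative selfadjoint, $U$ the partial isometry of the polar decomposition, and $D(A) = D(T)$. Since $T \ge 0$ we have $\sigma(T) \subset [0,\infty)$, so Theorem 0.1 provides a unique regular countably additive projection-valued measure $\mathbf{E}(\cdot)$ concentrated on $[0,\infty)$ with $T\phi = \int_0^\infty \lambda\,\mathbf{E}(d\lambda)\phi$ on $D(T)$ and with $D(T)$ described by the square-integrability of $\lambda$ against the positive measure $(\mathbf{E}(d\lambda)\phi,\phi) = d\|\mathbf{E}(\lambda)\phi\|^2$. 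I then put $\mathbf{F}(\cdot) := U\mathbf{E}(\cdot)$; this is a deformed spectral measure in the sense of Definition 1.1, and it inherits countable additivity (strongly) and regularity from $\mathbf{E}$ because $U$ is bounded. Uniqueness of $\mathbf{F}$ will follow from that of $\mathbf{E}$ together with uniqueness of the partial isometry in the polar decomposition.

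For part (2), fix $\phi \in D(A) = D(T)$, so $A\phi = UT\phi = U\int_0^\infty \lambda\,\mathbf{E}(d\lambda)\phi$, and it remains to pull $U$ through the integral. Writing $\mathbf{G}(\lambda) := \mathbf{E}([0,\lambda])\phi$ and $h(\lambda) = \lambda$, I apply Theorem 0.2 with the bounded operator $U$ in the role of its ``$T$'': $U$ has full domain and carries functions of bounded variation to functions of bounded variation, so $U\int_0^R \lambda\, d\mathbf{G}(\lambda) = \int_0^R \lambda\, d(U\mathbf{G})(\lambda) = \int_0^R \lambda\, d\mathbf{F}(\lambda)\phi$; letting $R \to \infty$, the improper integrals converge exactly because $\phi \in D(T)$, and we obtain $A\phi = \int_0^\infty \lambda\, d\mathbf{F}(\lambda)\phi$. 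For part (1): $\mathbf{E}(\{0\})$ is the orthogonal projection onto $\ker T$ and $U$ is isometric on $(\ker T)^\perp = \mathbf{E}((0,\infty))\mcH$, so $\|\mathbf{F}(S)\phi\| = \|U\mathbf{E}(S)\phi\| = \|\mathbf{E}(S \cap (0,\infty))\phi\|$ for every Borel set $S$; hence the control measures $(d\mathbf{F}(\lambda)\phi,\phi) = d\|\mathbf{F}(\lambda)\phi\|^2$ and $(d\mathbf{E}(\lambda)\phi,\phi) = d\|\mathbf{E}(\lambda)\phi\|^2$ agree on $(0,\infty)$, where the weight $\lambda^2$ is supported. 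Therefore $\int_0^\infty \lambda^2 (d\mathbf{F}(\lambda)\phi,\phi) = \int_0^\infty \lambda^2 (d\mathbf{E}(\lambda)\phi,\phi) = \|T\phi\|^2$, finite precisely when $\phi \in D(T) = D(A)$.

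For part (3), since $A$ need not be normal, $g(A)$ is taken to be the operator $U\,g(T)$, where $g(T) = \int_0^\infty g(\lambda)\,\mathbf{E}(d\lambda)$ is the Borel functional calculus of the selfadjoint $T$; then $D_g(A) = D(g(T))$ has the asserted form by Theorem 0.1(2), again with $\mathbf{E}$ replaced by $\mathbf{F}$ via the isometry of $U$ off $\ker T$. Repeating the commutation argument of part (2) with $g$ in place of $\lambda$ --- first for $g$ continuous, then for bounded Borel $g$ by uniform approximation, and finally for unbounded $g$ by truncation on $D_g(A)$ --- yields $g(A)\phi = \int_0^\infty g(\lambda)\,d\mathbf{F}(\lambda)\phi$, and the relation $g(A^*) = \bar g(A)$ follows from $A^* = TU^*$, the identity $g(T)^* = \bar g(T)$ for selfadjoint $T$, and $U^{**} = U$. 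I expect the main obstacle to be the interchange of $U$ with the spectral integral in part (2): the natural integrator $\mathbf{E}(\cdot)\phi$ need not be of bounded variation on $[0,\infty)$, so Theorem 0.2 does not apply verbatim, and one must instead pass $U$ through the $\mcH$-limits of the Riemann--Stieltjes sums defining the integral (legitimate precisely because $U \in L[\mcH]$), or work on a core for $T$ and extend by closedness. A secondary point requiring care throughout is that $U$ is only a partial isometry, so one must keep track of $\ker T$; this does no harm here only because $\mathbf{E}(\{0\})$ contributes nothing to integrals weighted by $\lambda^2$ (or by any $g$, once $g(0)$ is handled separately).
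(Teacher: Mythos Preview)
Your approach is essentially the paper's: write $A=UT$, apply Theorem~0.1 to $T$, set $\mathbf{F}=U\mathbf{E}$, invoke Theorem~0.2 to pull $U$ through the integral, and identify $D(A)$ and $D_g(A)$ via $\|A\phi\|=\|T\phi\|$. The paper simply asserts that $\mathbf{E}(\lambda)\phi$ is of bounded variation and applies Theorem~0.2 directly, so the obstacle you flag is one on which you are actually more careful than the original; your proposed repair (pass $U$ through the norm limits of the approximating sums by boundedness of $U$) and your explicit handling of $\ker T$ via the isometry of $U$ on $\mathbf{E}((0,\infty))\mathcal{H}$ are both sound refinements of what the paper leaves implicit.
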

\begin{proof}  To prove (1), write $A=UT$, where $U$ is the unique partial isometry and $T=[A^{*}A]^{1/2}$.  By Theorem 0.1, there is a positive spectral measure ${\bf{E}}(\cdot)$ such that, for each $x \in D(A)=D(T)$:   
\beqn
T\phi=  \int_{ 0}^\iy {\lambda} d{\bf{E}}(\lambda )\phi.
\eeqn
Since ${\bf{E}}(\lambda )\phi$ is a positive vector-valued function of bounded variation and  $U$ is a partial isometry,    ${\mathbf{F}}(\lambda )x=U{\mathbf{E}}(\lambda )\phi$ is of bounded variation, with $Var({\mathbf{F}}\phi,\R) \le Var({\mathbf{E}}\phi,\R)$.  Thus, by Theorem 0.2, 
\[
U\int_0^\iy {\lambda} d{\mathbf{E}}(\lambda )\phi=\int_0^\iy {\lambda} dU{\mathbf{E}}(\lambda )\phi.
\]
Since $A\phi=UT\phi$, if we set ${\mathbf{F}}(\lambda )x=U{\mathbf{E}}(\lambda )x$, we have from equation (1.1),   
\beqn
A\phi=  \int_{0}^\iy {\lambda} d{\mathbf{F}}(\lambda )\phi.
\eeqn
The proof of (2) and (3) follows the proof of the same results in \cite{DS}, when we recall that $\lt\|A\phi\rt\|^2=\lt\|T\phi\rt\|^2$, so that
\[
\begin{gathered}
  \left\{ {\phi\left| {\;\int_{ 0 }^\iy {\lambda ^2 \left( {d{\mathbf{E}}(\lambda )\phi ,\phi } \right) < } \infty } \right.} \right\} = \left\{ {\phi\left| {\;\int_{0 }^\infty {\lambda ^2 \left( {d{\mathbf{F}}(\lambda )\phi ,\phi } \right) < } \infty } \right.} \right\} \hfill \\
  \left\{ {\phi\left| {\;\int_{ 0 }^\infty {\left| {g(\lambda )} \right|^2 \left( {d{\mathbf{E}}(\lambda )\phi ,\phi } \right) < } \infty } \right.} \right\} = \left\{ {\phi\left| {\;\int_{ 0 }^\infty {\left| {g(\lambda )} \right|^2 \left( {d{\mathbf{F}}(\lambda )\phi ,\phi } \right) < } \infty } \right.} \right\}. \hfill \\ 
\end{gathered} 
\] 
\end{proof} 
\begin{rem} Let $A$ be self adjoint, with its spectrum on the negative real axis.  In this case, the standard spectral theorem gives us:
\beqn
A = \int_{-r_A}^0 {\lambda d{\bf E}(\lambda )} .
\eeqn
However, the deformed spectral theorem gives 
\beqn
A = \int_{0}^{r_A} {\lambda d{\bf F}(\lambda )} .
\eeqn
In fact, the domain of integration for ${\bf F}(\la)$ always coincides with the spectrum of the positive linear operator $T$, where $A=UT$. Thus, if $A$ is not a positive selfadjoint linear operator, the two representations are distinct.
\end{rem}
\section{Adjoints on Reflexive Banach Spaces}
In this section, we define the adjoint for operators in $\mcC[\mcB]$, but first,  we collect a few tools, which are needed for the sequel.  
Recall that duality map $J: \mcB \mapsto {\mcB}'$, is the set  
\[
J(u) = \left\{ {{F_u} \in \mcB'\left| { F_u(u)=\left\langle {u,{F_u}} \right\rangle  = {{\left\| u \right\|}^2} = {{\left\| {{F_u}} \right\|}^2}} \right.} \right\},\;\forall u \in \mcB.
\]
We want to construct a special duality map. Assume that  $\mcB \subset \mcH$ is a dense continuous embedding and fix $u \in \mcB$.   Let $M= \left\langle {u} \right\rangle$ be the closed subspace spanned by $u$ and define a seminorm $p_u ( \ \cdot \ )$ on $\mcB$ by  $p_u (v) = \lt\| u \rt\|_{\mcB} \lt\| v \rt\|_{\mcB}$.  Define the map $ {\hat S}_u (\, \cdot \,) =\left\langle {\, \cdot \,,{\hat S}_u } \right\rangle$ by:
\[
\left\langle { v ,{\hat S}_u } \right\rangle = {\hat S}_u (v) = \frac{{\left\| u \right\|_{\mcB}^2 }}
{{\left\| u \right\|_2^2 }}\left( {v,u} \right)_2 .
\]
On the closed subspace $M= \left\langle {u} \right\rangle, \; \left| {{\hat S}_u (v)} \right| = \left\| u \right\|_B \left\| v \right\|_B  \leqslant p_u (v)$.  By the Hahn-Banach Theorem, ${\hat S}_u (\, \cdot \,)$ has an extension, $S_u (\, \cdot \,)$, to $\mcB$ such that $\left| {S_u (v)} \right|   \leqslant p_u (v)= \left\| u \right\|_B \left\| v \right\|_B$ for all $v \in \mcB$. From here, we see that $\left\| {S_u } \right\|_{\mathcal{B}'} \le \left\| u \right\|_\mathcal{B}$.   On the other hand, we have $\left\| u \right\|_\mathcal{B}^2=S_u (u)  \leqslant \left\| u \right\|_\mathcal{B} \left\| {S_u } \right\|_{\mathcal{B}'}$, 
so that $S_u ( \, \cdot \, )$ is a duality mapping for $u$.  We call $S_u ( \, \cdot \, )$ the {\it Steadman duality map} on $\mathcal{B}$ associated with ${\mathcal{H}}$. 

Recall that on $\mcB$, a densely defined operator  ${A}$ is called accretive if $\operatorname{Re} \left\langle {Au ,F_u} \right\rangle  \ge 0$ for $u \in D(A)$ and any duallity map $F_u$.  The following definition extracts the essential properties of an adjoint operator on Hilbert space, but also makes sense on a Banach space
\begin{Def} If $A \in \mcC[\mcB]$, the closed densely defined linear operators on $\mcB$, we say that $A^*$ is a adjoint of $A$ if:
\begin{enumerate}
\item the operator $
A^ * A \ge 0$  (accretive),		
\item $(A^ * A)^ *   = A^ * A$ (naturally selfadjoint), and 
\item $I + A^ * A$ has a bounded inverse.
\end{enumerate}
\end{Def}
We need the following result by Lax \cite{L}.  
\begin{thm}[Lax]\label{L: lax}  Suppose ${\mathcal{B}}$ is a dense continuous embedding in a separable Hilbert space ${\mathcal{H}}$.   Let $A \in L[{\mathcal{B}}]$.  If $A$ is selfadjoint on ${\mathcal{H}}$ (i.e., $\left( {Au,v} \right)_{\mathcal{H}}  = \left( {u,Av} \right)_{\mathcal{H}}, 
\forall u{\text{,}}v \in {\mathcal{B}}$), then  $A$ is bounded on ${\mathcal{H}}$ and $
\left\| A \right\|_{\mathcal{H}}  \leqslant k \left\| A \right\|_{\mathcal{B}}$ for some positive constant $k$.
\end{thm}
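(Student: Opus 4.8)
The plan is to deduce the bound from a single Cauchy--Schwarz inequality, iterated, combined with the continuity of the embedding $\mcB \hookrightarrow \mcH$. Since that embedding is continuous, fix $C > 0$ with $\|v\|_{\mcH} \le C\|v\|_{\mcB}$ for all $v \in \mcB$. Fix $u \in \mcB$ and put $a_n = \|A^n u\|_{\mcH}$; each $a_n$ is finite because $A \in L[\mcB]$ keeps every iterate $A^n u$ inside $\mcB \subset \mcH$. If some $a_n$ with $n \ge 1$ vanishes, then by the same symmetry relation used below one descends to $a_1 = 0$, i.e.\ $Au = 0$, and there is nothing to prove; so I would assume all $a_n > 0$, which in particular forces $u \ne 0$ and hence $a_0 = \|u\|_{\mcH} > 0$.

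First I would establish log-convexity of the sequence $(a_n)$. Applying the hypothesis $(Av,w)_{\mcH} = (v,Aw)_{\mcH}$ on $\mcB$ with $v = A^{n-1}u$ and $w = A^{n}u$ gives $a_n^2 = (A^n u, A^n u)_{\mcH} = (A^{n-1}u, A^{n+1}u)_{\mcH} \le a_{n-1} a_{n+1}$ for every $n \ge 1$. Hence $n \mapsto \log a_n$ is a convex sequence; comparing its (nondecreasing) slopes over $\{0,\dots,N\}$ with the first slope and exponentiating yields the interpolation-type inequality $a_1 \le a_0^{\,1-1/N}\, a_N^{\,1/N}$ for every $N \ge 1$. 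Taking $N = 2^m$ gives $\|Au\|_{\mcH} \le \|u\|_{\mcH}^{\,1-1/2^m}\, \|A^{2^m}u\|_{\mcH}^{\,1/2^m}$.

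Next I would feed in the $\mcB$-side estimate. Boundedness of $A$ on $\mcB$ gives $\|A^{2^m}u\|_{\mcB} \le \|A\|_{\mcB}^{2^m}\|u\|_{\mcB}$, so with the embedding constant $\|A^{2^m}u\|_{\mcH} \le C\,\|A\|_{\mcB}^{2^m}\|u\|_{\mcB}$. Substituting into the previous line, $\|Au\|_{\mcH} \le C^{1/2^m}\, \|A\|_{\mcB}\, \|u\|_{\mcB}^{\,1/2^m}\, \|u\|_{\mcH}^{\,1-1/2^m}$. Letting $m \ra \iy$, the factors $C^{1/2^m}$ and $\|u\|_{\mcB}^{\,1/2^m}$ tend to $1$, leaving $\|Au\|_{\mcH} \le \|A\|_{\mcB}\,\|u\|_{\mcH}$ for every $u \in \mcB$. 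Since $\mcB$ is dense in $\mcH$, $A$ then extends uniquely to a bounded operator on $\mcH$ with $\|A\|_{\mcH} \le \|A\|_{\mcB}$, so in fact one may take $k = 1$.

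The main obstacle, such as it is, lies in the limiting step: one must be certain that the high iterates $A^{2^m}u$ are genuinely controlled in the $\mcB$-norm --- which is exactly the content of $A \in L[\mcB]$ --- and that the log-convexity inequality is invoked with the right exponents, so that $\|u\|_{\mcB}$ enters only through the vanishing power $1/2^m$ while $\|u\|_{\mcH}$ survives with exponent $1$. Everything else is bookkeeping; the mechanism is the familiar one by which symmetry promotes Cauchy--Schwarz to a spectral-radius-type estimate, and this is essentially Lax's original argument.
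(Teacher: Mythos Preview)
The paper does not supply its own proof of this theorem; it is quoted as a known result and attributed to Lax \cite{L}. Your argument is correct and is, as you note at the end, essentially Lax's original proof: symmetry plus Cauchy--Schwarz give the log-convexity $a_n^2 \le a_{n-1}a_{n+1}$, the resulting interpolation inequality $a_1 \le a_0^{1-1/N}a_N^{1/N}$ is combined with the $\mcB$-operator bound and the embedding constant, and the limit $N\to\infty$ kills the extraneous factors. Your handling of the degenerate case (some $a_n=0$) via downward descent is fine, and your observation that one may in fact take $k=1$ is a genuine sharpening of the statement as printed in the paper. The restriction to $N=2^m$ is harmless but unnecessary; any $N\to\infty$ suffices.
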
 
The following general result is a variation of one due to Kuelbs \cite{KB}.
\begin{thm} \label{GK} Suppose ${\mathcal{B}}$ is a separable reflexive Banach space, then there exist a separable Hilbert space ${\mathcal{H}}$  such that, 
\begin{enumerate}
\item
${\mathcal{B}} \subset {\mathcal{H}}$ as a continuous dense embedding.
\item 
${\mathcal{B}'} \subset {\mathcal{H}'}$ as a continuous dense embedding.
\end{enumerate}
\end{thm}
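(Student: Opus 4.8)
The plan is to reduce the statement to a single embedding lemma of Kuelbs type — every separable Banach space sits continuously and densely inside a separable Hilbert space — and then to apply it twice: once to $\mathcal{B}$ to obtain (1), and once to $\mathcal{B}'$ to obtain (2), using reflexivity to make the second application legitimate. Throughout we may assume $\mathcal{B}$ is infinite dimensional, the finite-dimensional case being trivial.

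First I would construct $\mathcal{H}$. Fix a sequence $\{\varphi_n\}$ dense in $\mathcal{B}$ and, by the Hahn--Banach theorem, for each $n$ a functional $F_n \in \mathcal{B}'$ with $\|F_n\|_{\mathcal{B}'}=1$ and $F_n(\varphi_n)=\|\varphi_n\|_{\mathcal{B}}$. Put
\[
(u,v)_{\mathcal{H}} \ := \ \sum_{n=1}^{\infty} 2^{-n}\,F_n(u)\,\overline{F_n(v)},\qquad u,v\in\mathcal{B}.
\]
The series converges absolutely since $|F_n(u)|\le\|u\|_{\mathcal{B}}$, and it is sesquilinear and nonnegative; the only point needing an argument is positive-definiteness, i.e. that $\{F_n\}$ separates the points of $\mathcal{B}$. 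If $F_n(u)=0$ for all $n$, pick $\varphi_{n_k}\to u$; then $\|\varphi_{n_k}\|_{\mathcal{B}}=F_{n_k}(\varphi_{n_k})=F_{n_k}(\varphi_{n_k}-u)\le\|\varphi_{n_k}-u\|_{\mathcal{B}}\to 0$, forcing $u=0$. Let $\mathcal{H}$ be the completion of $\mathcal{B}$ in the norm $\|u\|_{\mathcal{H}}=(u,u)_{\mathcal{H}}^{1/2}$. Since $\|u\|_{\mathcal{H}}^2\le\big(\sum_n 2^{-n}\big)\|u\|_{\mathcal{B}}^2=\|u\|_{\mathcal{B}}^2$, the inclusion $\mathcal{B}\hookrightarrow\mathcal{H}$ is continuous; it is injective and dense by construction, and $\mathcal{H}$ is separable because $\mathcal{B}$ is. This is (1).

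For (2) the decisive remark is that reflexivity forces $\mathcal{B}'$ to be separable: $\mathcal{B}''=\mathcal{B}$ is separable, and a Banach space whose dual is separable is itself separable, so (taking the space to be $\mathcal{B}'$, whose dual is $\mathcal{B}''=\mathcal{B}$) $\mathcal{B}'$ is separable. It is also again reflexive. Hence the construction of the previous paragraph applies verbatim to $\mathcal{B}'$ and yields a separable Hilbert space $\mathcal{K}$ with $\mathcal{B}'\hookrightarrow\mathcal{K}$ a continuous dense embedding. It then remains to identify $\mathcal{K}$ with $\mathcal{H}'$. Since $\mathcal{B}$ is infinite dimensional, $\mathcal{H}$ and $\mathcal{K}$ are infinite dimensional (each contains a dense infinite-dimensional subspace) and both are separable, so by the classification of separable Hilbert spaces there is a surjective linear isometry $\Theta:\mathcal{K}\to\mathcal{H}'$, using $\mathcal{H}'\cong\mathcal{H}$ by the Riesz representation. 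Transporting the dense continuous embedding $\mathcal{B}'\hookrightarrow\mathcal{K}$ through $\Theta$ produces a dense continuous embedding $\mathcal{B}'\hookrightarrow\mathcal{H}'$, which is (2).

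The main obstacle is less a genuine difficulty than a point that must be located correctly: reflexivity is used exactly once, to deduce separability of $\mathcal{B}'$ from that of $\mathcal{B}$ — without it no Hilbert embedding of $\mathcal{B}'$ exists at all (e.g. $\mathcal{B}=\ell^1$). A secondary subtlety is that the embeddings in (1) and (2) are \emph{not} dual to one another: under the Riesz/Gelfand identifications attached to (1) one only has $\mathcal{H}'\subset\mathcal{B}'$, the opposite inclusion, so (2) has to be obtained from an independent construction on $\mathcal{B}'$ rather than by dualizing (1). If one insists on building both embeddings from one biorthogonal system, an alternative is a single symmetric construction using a sequence $\{F_n\}\subset\mathcal{B}'$ together with norming elements $\{u_n\}\subset\mathcal{B}''=\mathcal{B}$, at the price of a more involved check that the companion inner product on $\mathcal{B}'$ reproduces the Hilbert-space dual norm of $\mathcal{H}$.
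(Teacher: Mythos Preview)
Your argument is correct, and for part~(1) it is essentially the paper's: both pick a dense sequence $\{u_n\}$ in $\mathcal{B}$, attach norming functionals (the paper uses duality maps $f_n$ with $f_n(u_n)=\|u_n\|_{\mathcal B}^2=\|f_n\|_{\mathcal B'}^2$ rather than your unit-norm $F_n$), form the weighted $\ell^2$ inner product, and complete.

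For part~(2) the routes diverge. The paper runs the \emph{dual} construction directly: reflexivity gives $\mathcal{B}=\mathcal{B}''$, so the same sequence $\{u_n\}$, viewed as elements of $\mathcal{B}''$, supplies the functionals on $\mathcal{B}'$, and one sets $(f,g)=\sum_n t_n\, f(u_n)\overline{g(u_n)}$ and completes --- precisely the ``symmetric biorthogonal'' alternative you sketch in your final paragraph. Your main argument instead reapplies part~(1) to $\mathcal{B}'$ to get some $\mathcal{K}$ and then transports $\mathcal{K}$ onto $\mathcal{H}'$ via the abstract classification of separable Hilbert spaces. Your route is cleaner and sidesteps the paper's unjustified claim that the duality maps $\{f_n\}$ of a dense sequence are themselves dense in $\mathcal{B}'$. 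The paper's route, by contrast, builds both embeddings from the single system $(\{u_n\},\{f_n\})$, so the two Hilbert completions are at least \emph{candidates} for a genuine dual pair --- compatibility that matters downstream, where the Riesz map $\mathbf{J}:\mathcal{H}\to\mathcal{H}'$ is asked to restrict to a bijection $\mathcal{B}\to\mathcal{B}'$. With your abstract isometry $\Theta$ no such compatibility is on offer, though, as you rightly observe, the theorem as stated does not require it.
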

\begin{proof} To prove (1), let $\{ u_n \} $ be a dense set in ${\mathcal{B}}$ and let $\{ f_n \} $ be any fixed set of corresponding duality mappings (i.e., $f_n  \in {\mathcal{B'}}$, the dual space of ${\mathcal{B}}$ and $
f_n (u_n ) = \left\langle {u_n ,f_n } \right\rangle  = \left\| {u_n } \right\|_{\mathcal{B}}^2 = \left\| {f_n } \right\|_{\mathcal{B}'}^2 $).   Let $\{ t_n \}$ be a positive sequence of numbers such that $\sum\nolimits_{n = 1}^\infty  {t_n }  = 1$, and define $\left( {u,v} \right)$ by:
\[
\left( {u,v} \right) = \sum\nolimits_{n = 1}^\infty  {t_n f_n (u)} \bar f_n (v).
\]
It is easy to see that $\left( {u,v} \right)$ is an inner product on ${\mathcal{B}}$.  Let $
{\mathcal{H}}$ be the completion of ${\mathcal{B}}$ with respect to this inner product.  It is clear that ${\mathcal{B}}$ is dense in ${\mathcal{H}}$, and 
\[
\left\| u \right\|^2  = \sum\nolimits_{n = 1}^\infty  {t_n \left| {f_n (u)} \right|^2 }  \le \sup _n \left| {f_n (u)} \right|^2  = \left\| u \right\|_{\mathcal{B}}^2,
\]
so the embedding is continuous.

For (2), we note that,  since $\mcB$ is reflexive $\mcB=\mcB''$.  In this case the set $\{ f_n \} $ is dense in ${\mathcal{B}'}$, so we may use the dense family $\{ u_n \} \subset {\mathcal{B}}$ to define an inner product on $\mcB'$ by 
\[
\left( {f,g} \right)  = \sum\nolimits_{n = 1}^\infty  {t_n u''_n (f)} \bar u''_n (g) = \sum\nolimits_{n = 1}^\infty  {t_n f (u_n)} \bar{g}(u_n),
\]
where $u''_n (h)=h(u_n)$, for each $h \in \mcB'$.  The completion of $\mcB'$ with the above inner product  provides a construction of $\mcH'$.  It is clear that $\mcB' \subset \mcH'$ as a continuous dense embedding.
\end{proof}
\begin{thm}When $\mcB$ is reflexive every operator $A \in \mcC[\mcB]$ has a well defined adjoint $A^* \in \mcC[\mcB]$.  If $A \in L[\mcB]$, the bounded linear operators on $\mcB$, then $A^* \in L[\mcB]$.
\end{thm}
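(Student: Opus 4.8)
The plan is to manufacture $A^{*}$ from the classical von~Neumann adjoint after transporting the problem to the auxiliary Hilbert space produced by Theorem~\ref{GK}. Fix such an $\mcH$ with $\mcB\subset\mcH$ and $\mcB'\subset\mcH'$ as continuous dense embeddings, and note that the Steadman duality map $S_{u}$ realizes the $\mcH$-inner product as a multiple of a duality pairing on $\mcB$: $\langle v,S_{u}\rangle=\bigl(\|u\|_{\mcB}^{2}/\|u\|_{\mcH}^{2}\bigr)(v,u)_{\mcH}$. This identity is the dictionary that converts Hilbert-space facts about a candidate operator $A^{*}A$ into the three properties demanded of an adjoint on $\mcB$ (accretivity of $A^{*}A$, $(A^{*}A)^{*}=A^{*}A$, and bounded invertibility of $I+A^{*}A$).

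First I would check that an arbitrary $A\in\mcC[\mcB]$, viewed as a densely defined operator on $\mcH$ with domain $D(A)$, is closable in $\mcH$: if $\{x_{n}\}\subset D(A)$ is $\mcH$-null with $Ax_{n}\to y$ in $\mcH$, then reflexivity of $\mcB$ supplies a weakly $\mcB$-convergent subsequence, in the Kuelbs-type $\mcH$ weak $\mcB$-convergence of a bounded sequence forces $\mcH$-norm convergence (after the $t_{n}$ are chosen so that $\sum t_{n}\|u_{n}\|_{\mcB}^{2}<\infty$), and weak closedness of the $\mcB$-closed graph of $A$ then pins the limit at $y=0$. Let $\bar A\in\mcC[\mcH]$ be the closure. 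The von~Neumann theorem recalled in the Preliminaries gives the Hilbert adjoint $\bar A^{\dagger}$, with $T^{2}:=\bar A^{\dagger}\bar A$ nonnegative and selfadjoint on $\mcH$ and $I+T^{2}$ boundedly invertible on $\mcH$. Define $A^{*}$ on $\mcB$ by the implicit relation $(Au,v)_{\mcH}=(u,A^{*}v)_{\mcH}$ for $u\in D(A)$, with domain those $v\in\mcB$ for which the solution stays in $\mcB$; equivalently $A^{*}$ is the part of $\bar A^{\dagger}$ that leaves $\mcB$ invariant, so $A^{*}A$ is $T^{2}$ restricted to $\mcB$, and $A^{*}$ is automatically closed on $\mcB$.

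It then remains to verify the three clauses together with density of $D(A^{*})$. Accretivity is immediate from the Steadman dictionary, since $\langle A^{*}Au,S_{u}\rangle$ is a nonnegative multiple of $(T^{2}u,u)_{\mcH}=\|\bar Au\|_{\mcH}^{2}$; natural selfadjointness holds because $A^{*}A$ agrees with the $\mcH$-selfadjoint operator $T^{2}$ on the core $\mcB$, with Lax's theorem (Theorem~\ref{L: lax}) keeping the ancillary bounded operators inside $L[\mcB]$. The decisive point is that $(I+T^{2})^{-1}$, which is bounded and selfadjoint on $\mcH$, in fact carries $\mcB$ boundedly into $\mcB$: an $\mcH$-bounded operator that is symmetric for the duality pairing and also bounded on $\mcB'$ must map $\mcB''=\mcB$ boundedly into itself, and the required $\mcB'$-boundedness comes from running the construction on the side $\mcB'\subset\mcH'$ of Theorem~\ref{GK}(2) and invoking reflexivity; this yields $(I+A^{*}A)^{-1}\in L[\mcB]$ and, along the way, that $D(A^{*})\supseteq(I+T^{2})^{-1}\mcB$ is dense in $\mcB$. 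Finally, when $A\in L[\mcB]$ its Banach adjoint $A'\in L[\mcB']$ is bounded, and the same transfer argument through Theorem~\ref{GK}(2) shows the $A^{*}$ produced above is bounded on $\mcB$. The main obstacle is precisely the pair of ``$\mcB$-preservation'' steps --- closability of $\bar A$ on $\mcH$ and invariance of $\mcB$ under the resolvent $(I+T^{2})^{-1}$ --- which is exactly where reflexivity and the twin embeddings of Theorem~\ref{GK} carry the weight; the remainder is bookkeeping through the Steadman map and Lax's theorem.
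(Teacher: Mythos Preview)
Your route is quite different from the paper's, and it has a real gap at the closability step. The paper never passes through a Hilbert closure of $A$ in order to \emph{define} $A^{*}$. Instead it uses an explicit formula: with ${\mathbf J}:\mcH\to\mcH'$ the Riesz isomorphism, the paper observes that the restriction $[{\mathbf J}]_{\mcB}$ is a bijection of $\mcB$ onto $\mcB'$, and then sets
\[
A^{*}:=\bigl[{\mathbf J}^{-1}A'{\mathbf J}\bigr]_{\mcB},
\]
where $A'$ is the ordinary Banach-space dual of $A$ acting on $\mcB'$. Reflexivity makes $A'$ closed and densely defined on $\mcB'$, so $A^{*}$ is immediately closed and densely defined on $\mcB$; when $A\in L[\mcB]$, boundedness of $A'$ and the closed graph theorem give $A^{*}\in L[\mcB]$. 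The accretivity and natural selfadjointness of $A^{*}A$ are then a short pairing computation, and Lax's theorem is invoked only to bound $\|A^{*}A\|_{\mcH}$. The extension of $A$ to $\mcH$ is mentioned afterwards, but it is not used to produce $A^{*}$.

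In your argument, by contrast, everything hinges on first closing $A$ in $\mcH$. You take $\{x_{n}\}\subset D(A)$ with $x_{n}\to 0$ and $Ax_{n}\to y$ in $\mcH$, and then say ``reflexivity of $\mcB$ supplies a weakly $\mcB$-convergent subsequence.'' But weak sequential compactness in $\mcB$ requires $\mcB$-boundedness, and nothing in the hypotheses provides it: an $\mcH$-null sequence can be wildly unbounded in the stronger $\mcB$-norm, and likewise for $\{Ax_{n}\}$. So the appeal to the weakly closed $\mcB$-graph of $A$ is not justified, you have no closure $\bar A\in\mcC[\mcH]$, and hence no von~Neumann adjoint $\bar A^{\dagger}$ to restrict. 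The subsequent ``$\mcB$-preservation'' step for $(I+T^{2})^{-1}$ is also only gestured at, though that one looks more repairable once you have the operator in hand. The paper's algebraic definition via $A'$ and ${\mathbf J}$ simply bypasses both obstacles.
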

 \begin{proof}
Let ${\bf J} :{\mcH}  \to {\mcH}'$.  It is easy to see that  ${\bf J}^*={\bf J}$.  Since $[{\bf J}]_{\mcB}$ is one to one and onto $\mcB'$, if ${A}  \in \mcC[{\mcB}]$, then $[ {{A'} {\bf J} } ]_{\mcB}: \mcB' \to \mcB'$. Since $A'$ is closed and densely defined, it follows that ${\bf J}^{ - 1} {A'} {\bf J} :{\mcB}  \to {\mcB}$ is closed and densely defined. Thus, we can define ${A}^ {*}   = [ {{\bf J}^{ - 1} {A'} {\bf J} } ]_{\mcB}$.

In case ${A}  \in L[{\mcB}]$, we know that ${A}^ {*}   = [ {{\bf J}^{ - 1} {A'} {\bf J} } ]_{\mcB}$ is defined on all of $\mcB$.  By the closed graph theorem,   ${A}^ {*} \in L[\mcB]$.  

In both cases, $A^*$ and $[A^*]'$ are also densely defined on ${\mcH}$ and  ${\mcH}'$ respectively, so that we can extend $A$ and $A^*$ to closed densely defined linear operators on ${\mcH}$.  Furthermore, the injective nature of ${\bf J}$ and that of $\mcB \to \mcB'$ means that $A^*A$ extends to an accretive self adjoint linear operator on  ${\mcH}$. 

If  ${A} \in L[\mcB]$, then $\left( {A^* Au,u} \right)  \geqslant 0$, and  
\[
\begin{gathered}
  \left( {A^* Ag\,,\,\,u} \right)  = \left\langle {A^* Au\,,\,\,{\mathbf{J}} (u)} \right\rangle  = \left\langle {{\mathbf{J}}^{ - 1} A'{\mathbf{J}} (Au)\,,\,\,{\mathbf{J}} (u)} \right\rangle  \hfill \\
   = \left\langle {{\mathbf{J}} (Au)\,,\,\,Au} \right\rangle  = \left\langle {u,\,\,A'{\mathbf{J}} (Au)} \right\rangle  = \left\langle {{\mathbf{J}} (u)\,,\,\,A^* Au} \right\rangle  = \left( {u\,,\,\,A^* Au} \right) \hfill \\ 
\end{gathered} 
\]
for all $u \in \mcB$.  By Lax's Theorem, $A^*A$ has a bounded extension to $\mcH$ and $\left\| {A^* A} \right\|  = \left\| A \right\|^2  \leqslant k\left\| A \right\|_B^2 $, where $k$ is a positive constant.
\end{proof}
The above theorem shows that every closed densely defined linear operator on $\mcB$ has a closed densely defined extension to $\mcH$, with bounded operators on $\mcB$ becoming bounded on $\mcH$.  The proof of the next result follows.
\begin{cor}If $\mcB$ be a separable reflexive Banach space and $ A \in \mcC[\mcB]$, then 
\begin{enumerate}
\item the operator $
A^ * A \ge 0$  (accretive),		
\item $(A^ * A)^ *   = A^ * A$ (naturally selfadjoint), and 
\item $I + A^ * A$ has a bounded inverse.
\end{enumerate}
\end{cor}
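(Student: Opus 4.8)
The plan is to obtain all three items directly from Theorem 2.4, the von Neumann theorem quoted in the Preliminaries, and the Steadman duality map, so that the argument is essentially a transfer of Hilbert-space facts down to $\mcB$. First I would record what Theorem 2.4 supplies: the adjoint $A^* = [{\bf J}^{-1}A'{\bf J}]_{\mcB} \in \mcC[\mcB]$ is well defined, and $A$ and $A^*$ extend to closed densely defined operators $\hat A$, $\hat A^*$ on $\mcH$ with $\hat A^*$ the $\mcH$-adjoint of $\hat A$ (this uses the injectivity of ${\bf J}$ and of the embedding $\mcB \hookrightarrow \mcB'$). Putting $\hat B = \hat A^*\hat A$, von Neumann's theorem makes $\hat B$ a nonnegative selfadjoint operator on $\mcH$, and one checks that $\hat B$ extends $A^*A$, that is, $A^*Au = \hat B u$ for every $u$ in the $\mcB$-domain of $A^*A$. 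Everything then reduces to pushing the three properties of $\hat B$ back to $\mcB$.

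For (1), fix $u \in D(A^*A)$ and pair $A^*Au$ against the Steadman duality map $S_u$ associated with $\mcH$; by its defining formula $\langle A^*Au, S_u \rangle = (\|u\|_{\mcB}^2/\|u\|_2^2)(A^*Au,u)_{\mcH} = (\|u\|_{\mcB}^2/\|u\|_2^2)(\hat B u,u)_{\mcH} \ge 0$ because $\hat B \ge 0$, so $A^*A$ is accretive. For (2), the adjoint of Definition 2.1 gives $(A^*A)^* = [{\bf J}^{-1}(A^*A)'{\bf J}]_{\mcB}$; on $\mcH$ the map ${\bf J}$ is the Riesz isometry with ${\bf J}^* = {\bf J}$, so selfadjointness of $\hat B$ reads $\hat B = {\bf J}^{-1}\hat B'{\bf J}$, and restricting this identity to $\mcB$ (noting that $(A^*A)'$ is the part of $\hat B'$ carrying $\mcB'$ into $\mcB'$) yields $(A^*A)^* = A^*A$.

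The one item requiring real work is (3). On $\mcH$ the spectral theorem, Theorem 0.1, applied to $\hat B$ gives $(I+\hat B)^{-1} = \int_0^{\iy} (1+\lambda)^{-1}{\bf E}(d\lambda) \in L[\mcH]$ with norm at most $1$; the task is to show that $I + A^*A : D(A^*A) \to \mcB$ is a bijection whose inverse is bounded on $\mcB$, equivalently that the resolvent $(I+\hat B)^{-1}$ maps $\mcB$ into $\mcB$. My approach would be to write $I + A^*A = {\bf J}^{-1}({\bf J} + A'{\bf J}A)$, use that $[{\bf J}]_{\mcB}:\mcB \to \mcB'$ is a bijection, and show that ${\bf J} + A'{\bf J}A$ restricts to a bijection of $\mcB$ onto $\mcB'$; this is where reflexivity enters, since then $A' \in \mcC[\mcB']$ and the dual Hilbert space $\mcH'$ of Theorem 2.3 is available, and one plays the $\mcB$-statement against the already known $\mcH$-statement. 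Once $(I+\hat B)^{-1}(\mcB) \subseteq \mcB$ is established, $I + A^*A$ maps onto $\mcB$, its inverse is the restriction of a bounded operator and has closed graph, hence is bounded on $\mcB$; alternatively, combined with (1) this surjectivity makes $A^*A$ $m$-accretive, giving a contractive $(I+A^*A)^{-1}$ at once. I expect verifying that $(I+\hat B)^{-1}$ preserves $\mcB$ to be the main obstacle, the remaining steps being routine transfers through ${\bf J}$ and the Steadman map.
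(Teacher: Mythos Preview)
Your overall route matches the paper's: the paper gives no proof beyond the single sentence ``The proof of the next result follows,'' treating the corollary as an immediate consequence of Theorem~2.4 and von Neumann's theorem on the ambient Hilbert space $\mcH$. So extending $A$, $A^*$, $A^*A$ to $\mcH$ and pulling the Hilbert-space properties back to $\mcB$ is precisely the intended argument. You are also right that item~(3), namely that $(I+\hat B)^{-1}$ leaves $\mcB$ invariant, is where the substantive work hides; the paper does not address this at all, and your flagging it as the main obstacle is a fair assessment that the paper glosses over.

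There is, however, a gap in your argument for (1). The Steadman functional $\hat S_u(v)=\tfrac{\|u\|_\mcB^2}{\|u\|_2^2}(v,u)_2$ is only defined on the one-dimensional subspace $M=\langle u\rangle$, and $S_u$ is obtained from it by a Hahn--Banach extension dominated by $p_u(v)=\|u\|_\mcB\|v\|_\mcB$. That extension is neither unique nor, in general, given by the same inner-product formula on all of $\mcB$; indeed, the naive global extension $v\mapsto \tfrac{\|u\|_\mcB^2}{\|u\|_2^2}(v,u)_2$ typically has $\mcB'$-norm strictly larger than $\|u\|_\mcB$ and so fails to be a duality map. Consequently your identity $\langle A^*Au,S_u\rangle=\tfrac{\|u\|_\mcB^2}{\|u\|_2^2}(\hat Bu,u)_\mcH$ is unjustified whenever $A^*Au\notin\langle u\rangle$. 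A second issue is that the paper's definition of accretive demands $\operatorname{Re}\langle A^*Au,F_u\rangle\ge 0$ for \emph{every} duality map $F_u$, so establishing nonnegativity for one particular $S_u$ would not by itself complete the argument without further work (e.g.\ going through $m$-accretivity, which you only invoke after (3)).
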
 
\begin{thm} If $\mcB$ be a separable reflexive Banach space, then every $ A \in \mcC[\mcB]$ is of Baire class one.
\end{thm}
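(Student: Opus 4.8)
The plan is to produce the approximating bounded operators by a Yosida (resolvent) regularisation built from $A^{*}A$, and then to read off the convergence from the Hilbert space deformed spectral calculus proved above. By the preceding theorem $A$ has an adjoint $A^{*}\in\mcC[\mcB]$, and by the Corollary the operator $A^{*}A$ is accretive, naturally selfadjoint, and $I+A^{*}A$ has a bounded inverse on $\mcB$. Writing $A^{*}=[\mathbf J^{-1}A'\mathbf J]_{\mcB}$, so that $nI+A^{*}A=\mathbf J^{-1}(n\mathbf J+A'\mathbf J A)$, the same reasoning applied with $n\mathbf J+A'\mathbf J A$ in place of $\mathbf J+A'\mathbf J A$ (this is, via $\mathbf J$, the $\mcB$-realisation of a positive selfadjoint operator on $\mcH$ bounded below by $n>0$) yields $R_{n}:=(I+\tfrac1n A^{*}A)^{-1}\in L[\mcB]$ for every $n\in\N$, with $R_{n}$ mapping $\mcB$ into $D(A^{*}A)\subseteq D(A)$. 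I then set $A_{n}:=A\,R_{n}$.

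First I would check that $A_{n}\in L[\mcB]$. Each $A_{n}$ is everywhere defined on $\mcB$ because $R_{n}(\mcB)\subseteq D(A)$; and if $u_{k}\to u$ in $\mcB$ with $A_{n}u_{k}\to w$ in $\mcB$, then $R_{n}u_{k}\to R_{n}u$ and $A(R_{n}u_{k})\to w$ in $\mcB$, so closedness of $A$ on $\mcB$ forces $R_{n}u\in D(A)$ and $A R_{n}u=w$. Thus $A_{n}$ has closed graph and, being everywhere defined, lies in $L[\mcB]$ by the closed graph theorem. Note that the genuinely hard input — existence of the adjoint, nonnegativity and selfadjointness of $A^{*}A$, and boundedness on $\mcB$ of the resolvents — is already supplied by the preceding theorem and Corollary, which in turn rest on the Kuelbs-type embedding (Theorem~\ref{GK}) and Lax's theorem.

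It remains to prove $A_{n}\phi\to A\phi$ for every $\phi\in D(A)$. Here I would pass to $\mcH$: by the preceding theorem $A$ extends to $\widehat A\in\mcC[\mcH]$, and by the deformed spectral theorem for operators in $\mcC[\mcH]$ we have $\widehat A=\int_{0}^{\iy}\la\,d\mathbf F(\la)$ with $\mathbf F=U\mathbf E$, $U$ the partial isometry of the polar decomposition and $\mathbf E$ the spectral measure of the nonnegative square root $\widehat T$ of the selfadjoint extension of $A^{*}A$; consequently $A_{n}$ extends to $U\,g_{n}(\widehat T)$ with $g_{n}(\la)=\la/(1+\la^{2}/n)$. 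Since $|g_{n}(\la)|\le\la$ and $g_{n}(\la)\to\la$ pointwise, dominated convergence in the spectral integral gives $g_{n}(\widehat T)\phi\to\widehat T\phi$ in $\mcH$, hence $A_{n}\phi\to A\phi$ in the $\mcH$-norm for every $\phi\in D(A)$, and for $\phi\in D(A^{*}A)$ one even gets the rate $\|A_{n}\phi-A\phi\|_{\mcH}=\tfrac1n\|U g_{n}(\widehat T)A^{*}A\phi\|_{\mcH}\le\tfrac{1}{2\sqrt n}\|A^{*}A\phi\|_{\mcH}$. The main obstacle I anticipate is upgrading this from the $\mcH$-norm to the $\mcB$-norm — which is what Baire class one actually demands — since the embedding $\mcB\hookrightarrow\mcH$ runs only one way. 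To handle it I would establish a uniform bound $\sup_{n}\|R_{n}\|_{L[\mcB]}<\iy$ together with density of $D(A^{*}A)$ in $D(A)$ for the graph norm; then for $\phi\in D(A)$ one has $R_{n}\phi\to\phi$ in $\mcB$, the sequence $A_{n}\phi=A(R_{n}\phi)$ is $\mcB$-Cauchy (combining the uniform bound with the estimate on the dense set $D(A^{*}A)$), and closedness of $A$ on $\mcB$ identifies its limit as $A\phi$. It is precisely in securing the uniform $\mcB$-bound on the resolvents that I expect reflexivity of $\mcB$ and the symmetric embedding $\mcB\subset\mcH$, $\mcB'\subset\mcH'$ of Theorem~\ref{GK} to be indispensable; failing a direct argument, the statement can also be deduced from the theorem of Vinokurov, Petunin and Pliczko \cite{VPP}, as indicated in the introduction.
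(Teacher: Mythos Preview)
Your strategy is the same Yosida-type regularisation as the paper's, but you regularise with $A^{*}A$ whereas the paper regularises with its square root $T=(A^{*}A)^{1/2}$ and sets $A_\la=\la A\,R(\la,-T)$. The point of using $T$ rather than $A^{*}A$ is the intertwining coming from the two-sided polar decomposition $A=UT=\bar T U$ (with $\bar T=(AA^{*})^{1/2}$): one has $A\,R(\la,-T)=R(\la,-\bar T)\,A$ on $D(A)$, so that
\[
A_\la\phi=\la\,R(\la,-\bar T)\,(A\phi),
\]
and convergence $A_\la\phi\to A\phi$ follows immediately from the strong convergence $\la R(\la,-\bar T)\to I$, applied to the single vector $A\phi$. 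This bypasses entirely the graph-norm density of $D(A^{*}A)$ in $D(A)$ and the Cauchy-sequence argument you sketch; with the square-root regulariser there is nothing to ``upgrade'', because the resolvent lands on the output side of $A$ rather than the input side.

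What remains in both approaches is the same analytic input: the resolvent bound $\left\|R(\la,-T)\right\|\le\la^{-1}$ (and the analogous bound for $\bar T$) in the $\mcB$-norm, which is exactly the obstacle you single out. The paper simply asserts this estimate and uses it; it does not supply a separate $\mcB$-norm justification beyond the framework already built (the adjoint on $\mcB$, Corollary~2.5, and the extension to $\mcH$). So your instinct about where the difficulty lies is correct, but the paper's route through $T$ and the intertwining is materially shorter than yours and does not require the auxiliary density and uniform-boundedness steps you propose.
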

\begin{proof}Since each $A \in \mcC[\mcB]$ has a extension $\bar{A}$ to $\mcC[\mcH]$, we can write $A=UT$, where $UT$ is the restriction to $\mcB$ of the polar decomposition of $\bar{A}$.  From $A^*=TU^*$, we see that $T=A^*U$, so that 
\[
AA^* =(UT)(TU^*)= UA^*AU^* \, \Rightarrow \, AA^*U=UA^*A.
\]
It follows that, $A=\bar{T}U$ and $A^*=U^*\bar{T}$.

For $\la>0$, let $R(\la, -T)$ be the resolvent of $-T$ and let 
\beqn
A_\la= \la AR(\la, -T)=\la^2 UR(\la,-T)-\la U.
\eeqn
It is easy to see that $A_\la$ is bounded and that $AR(\la, -T)\phi=R(\la, -\bar{T})A\phi$ for $\phi \in D(A)$.

If $0<\la$ and $\phi \in D(A)$ we have
\[
\begin{gathered}
  R(\lambda ,-T)\left( {\lambda I +T} \right)\phi = \phi\; \Rightarrow  \hfill \\
  \lambda R(\lambda ,-T)\phi - \phi = R(\lambda ,-T)(-T\phi) \; \Rightarrow  \hfill \\
  \left\| {\lambda R(\lambda ,-T)\phi - \phi} \right\| \leqslant \left\| {R(\lambda ,-T)} \right\|\left\| {T\phi} \right\| \leqslant \lambda ^{ - 1} \left\| {T\phi} \right\| \hfill \\ 
\end{gathered} 
\]
This last term converges to zero as $\la \to \iy$, so that 
\[
\mathop {\lim }\limits_{\lambda  \to \infty } \lambda R(\lambda ,-T)\phi= \phi.
\]
Since $D(A)$ is dense, the convergence holds for all $\phi \in \mcB$.  

For the second part, we see from the last result that
\[
\mathop {\lim }\limits_{\lambda  \to \infty } \lambda AR(\lambda ,-T)\phi = \mathop {\lim }\limits_{\lambda  \to \infty } \lambda R(\lambda ,-\bar{T})A\phi = A\phi,
\]
whenever $\phi \in D(A)$.
\end{proof}
\subsection*{Examples} 
We begin with the following useful result (see Kato \cite{K}, pg. 168).
\begin{thm}Let $T$ be a densely defined linear operator on a reflexive Banach space $\mcB$.  Then, the following holds:
\begin{enumerate}
\item The adjoint of $T, \; T^*$ is a closed linear operator.
\item The operator $T$ has a closed extension if and only if $D(T^*)$ is dense in $\mcB^*$.  In this case, the closure $\bar{T}= T^{**}$.
\item If $T$ is closable, then $(\bar{T})^*=T^*$.
\end{enumerate}
\end{thm}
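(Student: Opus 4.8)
The plan is to route everything through the graph $G(T)=\{(u,Tu):u\in D(T)\}\subset\mcB\times\mcB$ together with its annihilator, so that the reflexivity hypothesis is used at exactly one step. Introduce the ``twist'' map $W(u,v)=(v,-u)$ on $\mcB\times\mcB$, and the formally identical map $W'$ on $\mcB^*\times\mcB^*$; note $W^{2}=-I$. Pairing $\mcB\times\mcB$ with $\mcB^*\times\mcB^*$ coordinatewise, the definition of the adjoint says that a pair $(f,g)$ lies in the graph of $T^{*}$ iff $\langle v,f\rangle-\langle u,g\rangle=0$ for every $(u,v)\in G(T)$, i.e.
\[
G(T^{*}) = \bigl( W(G(T)) \bigr)^{\perp} = W'\bigl( G(T)^{\perp} \bigr).
\]
Part (1) is then immediate, and needs no reflexivity: an annihilator is weak-$*$ closed, hence norm closed, so $G(T^{*})$ is a closed subspace of $\mcB^*\times\mcB^*$, i.e. $T^{*}$ is a closed operator.

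For part (3), closability gives $T\subset\bar T$, hence $G(T)\subset G(\bar T)$, and the displayed formula yields $(\bar T)^{*}\subset T^{*}$. For the opposite inclusion take $f\in D(T^{*})$ and $u\in D(\bar T)$, choose $u_{n}\in D(T)$ with $u_{n}\to u$ and $Tu_{n}\to\bar T u$, and pass to the limit in $\langle Tu_{n},f\rangle=\langle u_{n},T^{*}f\rangle$ to get $\langle \bar T u,f\rangle=\langle u,T^{*}f\rangle$; thus $f\in D((\bar T)^{*})$ with $(\bar T)^{*}f=T^{*}f$. Hence $T^{*}=(\bar T)^{*}$; again reflexivity is not used.

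Part (2) is where $\mcB^{**}=\mcB$ does the work. The second adjoint $T^{**}=(T^{*})^{*}$ is defined exactly when $D(T^{*})$ is dense in $\mcB^{*}$, and since $T^{*}$ is then densely defined the displayed formula applies to it as well; unwinding the two twists and invoking the bipolar theorem, one computes
\[
G(T^{**}) = \bigl( W'(G(T^{*})) \bigr)^{\perp} = W\bigl( G(T^{*})^{\perp} \bigr) = W\bigl( (W(G(T)))^{\perp\perp} \bigr) = W\bigl( W(\overline{G(T)}) \bigr) = \overline{G(T)},
\]
where the step $(W(G(T)))^{\perp\perp}=\overline{W(G(T))}=W(\overline{G(T)})$ is precisely the bipolar theorem, legitimate because $\mcB$ is reflexive, so the weak-$*$ bipolar computed in $\mcB^{**}$ coincides with the weak, hence (by Mazur, for the convex set $W(G(T))$) the norm, closure; the last equality uses $W^{2}=-I$ and $-\overline{G(T)}=\overline{G(T)}$. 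Consequently, if $D(T^{*})$ is dense then $\overline{G(T)}=G(T^{**})$ is the graph of an operator, so $T$ is closable with $\bar T=T^{**}$. Conversely, if $T$ is closable but $D(T^{*})$ were not dense, pick $0\neq\Phi\in\mcB^{**}=\mcB$ annihilating $D(T^{*})$; then $(\Phi,0)\in G(T^{*})^{\perp}=\overline{W(G(T))}=W(\overline{G(T)})$, so $(0,\Phi)=W^{-1}(\Phi,0)\in\overline{G(T)}$ with $\Phi\neq0$, contradicting that $\overline{G(T)}$ is a graph. Hence $D(T^{*})$ is dense.

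The only real obstacle is getting the bookkeeping with $W$, $W'$ and the iterated annihilators right, and in particular pinning down that separable reflexivity is needed solely to turn the abstract bipolar identity $(E^{\perp})^{\perp}=\overline{E}$ into a statement inside $\mcB\times\mcB$ rather than $\mcB^{**}\times\mcB^{**}$; every other step is a formal manipulation of graphs or an elementary limiting argument. Alternatively, if one prefers to stay closer to the earlier parts of the paper, the construction of Theorem~\ref{GK} embeds $\mcB$ densely and continuously in a Hilbert space $\mcH$, and one can transport the Hilbert-space version of (1)--(3) back to $\mcB$ via the duality map $\mathbf J$ of Theorem~1.7; but the graph-annihilator route above is cleaner and avoids introducing $\mcH$ at all.
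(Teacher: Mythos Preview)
The paper does not prove this theorem at all: it is stated as a quotation from Kato (``see Kato [K], pg.~168'') and used as a black box. Your proposal therefore supplies a full proof where the paper gives none, and it is essentially the standard graph-annihilator argument one finds in Kato or Rudin; the argument is correct. The key identity $G(T^{*})=(W(G(T)))^{\perp}$, the use of the bipolar theorem to recover $\overline{G(T)}$ from the iterated annihilator, and the converse in (2) via a nonzero $\Phi\in\mcB^{**}=\mcB$ annihilating $D(T^{*})$ are all handled properly.

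Two cosmetic remarks. First, in your concluding paragraph you write ``separable reflexivity'', but the theorem only assumes reflexivity; separability is irrelevant here and you do not use it in the proof. Second, your ``alternative'' route via the Kuelbs embedding and the duality map $\mathbf J$ points to a nonexistent ``Theorem~1.7''; in this paper the embedding is Theorem~\ref{GK} and $\mathbf J$ appears in the proof of the theorem immediately following it. Since you correctly note that the graph-annihilator route is cleaner and self-contained, I would simply drop the alternative paragraph.
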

From the last result, we see that, for any closed densely defined linear operator $A$ defined on $L^p[\R^n], \; 1 < p< \iy$, for which the domain of $A^*,\; D(A^*) \subset L^q[\R^n], 1/p +1/q=1$, is dense in $L^p[\R^n]$, also has a  closed densely defined extension to $L^p[\R^n]$. 
\begin{ex}Let $A$ be a second order differential operator on $L^p[\R^n]$, of the form
\[
A = \sum\limits_{i,j = 1}^n {{a_{ij}}({\mathbf{x}})} \frac{{{\partial ^2}}}{{\partial {x_i}\partial {x_j}}} + \sum\limits_{i,j = 1}^n {{b_{ij}}({\mathbf{x}})} {x_j}\frac{\partial }{{\partial {x_i}}},
\]
where ${\bf a}({\bf x})=\left[\kern-0.15em\left[ {{a_{ij}}({\mathbf{x}})} 
 \right]\kern-0.15em\right]$ and ${\bf b}({\bf x})=\left[\kern-0.15em\left[ {{b_{ij}}({\mathbf{x}})} \right]\kern-0.15em\right]$ are matrix-valued functions in $\C_c^{\iy}[\R^n \times \R^n]$ (infinitely differentiable functions with compact support).  We also assume that, for all ${\bf x} \in \R^n \; det\left[\kern-0.15em\left[ {{a_{ij}}({\mathbf{x}})} 
 \right]\kern-0.15em\right] > \e$  and the imaginary part of the eigenvalues of ${\bf b}({\bf x})$ are bounded above by $-\e$, for some $\e>0$.
Note, since we don't require  ${\bf a}$ or ${\bf b}$ to be symmetric, $A \ne A^*$.

It is well-known that $\C_c^{\iy}[\R^n]$ is dense in  $L^p[\R^n] \cap L^q[\R^n]$ for all $p,q \in [1, \iy) \cap \N$.  Furthermore, since $A^*$ is invariant on $\C_c^{\iy}[\R^n], \; A^*: L^p[\R^n] \to  L^p[\R^n]$. It now follows from Theorem 2.8, that  $A^*$ has a closed densely defined extension to $L^p[\R^n]$.
\end{ex} 
\begin{ex}
The second example shows directly the setup used in Theorem 2.3. Let $\Om$ be a bounded open domain of class $\C^1$ in $\R^n$ and let $\mcH_0^1[\Om]$, the set of all real-valued functions $u \in L^2[\Om]$ such that their first order weak partial derivatives are in $L^2[\Om]$ and vanish on the boundary.  It follows that 
\[\left( {u,v} \right) = \int_\Omega  {\nabla u({\mathbf{x}}) \cdot \nabla v({\mathbf{x}})d{\mathbf{x}}} \]
defines an inner product on $\mcH_0^1[\Om]$.  The  dual ${\mcH^{ - 1}}[\Omega ]$ coincides with the set of all distributions of the form
\[
u = {h_0} + \sum\limits_{i = 1}^n {\frac{{\partial {h_i}}}{{\partial {x_i}}}} ,\quad {\text{where}}\;{h_i} \in {L^2}[\Omega ],\quad 1 \leqslant i \leqslant n.
\] 
In this case we also have for $p \in [2, \iy)$ and $q \in (1, 2], \tf{1}{p}+\tf{1}{q}=1$ that,  
\[
\mcH_0^1[\Omega ] \subset {L^p}[\Omega ] \subset {L^q}[\Omega ] \subset {\mcH^{ - 1}}[\Omega ]
\]  
all as continuous dense embeddings.

From the inner product on $\mcH_0^1[\Om]$ we see that $J_0=-\De$, the Laplace operator under Dirichlet homogeneous boundary conditions on $\Om$.  If we set $\mcH=\mcH^{-1}$ and $J=J_0^{-1}$, we can apply Theorem 2.4 to obtain $A^* \in \mcC[L^r(\Om)]$, for all $1<r<\iy$, by $A^* =-\De A'[-\De]^{-1}$, for all $A' \in \mcC[L^{r' }(\Om)]$.  Where $\tf{1}{r}+ \tf{1}{r'}=1$.
\end{ex}
\subsection{Spectral Theorem}
\begin{thm} Let $A \in {\mcC}[{\mcB}]$ be arbitrary, where $\mcB$ is a separable reflexive Banach space.  Then, for each $\phi \in D(A)$, there exists a  deformed spectral measure ${\bf{F}}(\cdot)$ and a  vector-valued function ${\bf{F}}(\lambda )\phi$ of bounded variation such that:
\begin{enumerate}
\item $D(A)$  satisfies
\[
D(A) = \left\{ {\left. {\phi \in \mathcal{B}} \ \right|\;\int_{0}^\iy {\lambda ^2 \left({d{\bf{F}}(\lambda )\phi,S_\phi} \right)_\mathcal{B}  < \infty } } \right\}
\]
and
\item 
\[
A\phi =  \int_{ 0}^\iy {\lambda d{\bf{F}}(\lambda )\phi}, \ {\rm for \ all} \; \phi \in D(A). 
\]
\item If $g(\cdot)$ is a complex-valued Borel function defined (a.e) on $\R$, then 
\[
g(A)\phi= \int_{ 0}^\iy {g(\lambda) d{\bf{F}}(\lambda )\phi}  \ {\rm for \ all} \; \phi \in D_g(A),
\]
where
\[
D_g(A) = \left\{ {\left. {\phi \in \mathcal{B}} \ \right|\;\int_{0}^\iy {\lt|g(\lambda)\rt|^2 \left( {d{\bf{F}}(\lambda )\phi,S_\phi} \right)_\mathcal{B}  < \infty } } \right\}.
\]
\end{enumerate}
\end{thm}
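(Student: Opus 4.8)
The strategy is to transport the Hilbert-space deformed spectral theorem (Theorem 1.2) to $\mcB$ through the embedding machinery of Section 2, and then to rewrite the resulting scalar spectral measures by means of the Steadman duality map. First I would invoke the Kuelbs-type embedding (Theorem 2.3) to fix a separable Hilbert space $\mcH$ with $\mcB \subset \mcH$ and $\mcB' \subset \mcH'$ as continuous dense embeddings, and let $\mathbf{J}:\mcH \to \mcH'$ be the duality map used in Theorem 2.4. By Theorem 2.4 the operator $A$ has an adjoint $A^{*} = [\mathbf{J}^{-1}A'\mathbf{J}]_{\mcB} \in \mcC[\mcB]$, and, as recorded after that theorem and in the proof of Theorem 2.6, both $A$ and $A^{*}$ extend to closed densely defined operators $\bar A,\bar A^{*}$ on $\mcH$ with $\bar A^{*}\bar A \ge 0$ selfadjoint on $\mcH$. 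I would then write $\bar A = UT$ for the polar decomposition on $\mcH$, where $T = [\bar A^{*}\bar A]^{1/2}$ and $U$ is the associated partial isometry, exactly as in the proof of Theorem 2.6.

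Next I would apply Theorem 1.2 to $\bar A \in \mcC[\mcH]$, obtaining the deformed spectral measure $\mathbf{F}(\cdot) = U\mathbf{E}(\cdot)$, where $\mathbf{E}(\cdot)$ is the positive spectral measure of $T$; for $\phi \in D(\bar A)$ the function $\mathbf{F}(\lambda)\phi = U\mathbf{E}(\lambda)\phi$ is of bounded variation and
\[
\bar A\phi = \int_0^{\iy}\lambda\, d\mathbf{F}(\lambda)\phi,
\]
with the analogous formula for $g(\bar A)$. Now restrict to $\mcB$: for $\phi \in D(A)$ one has $A\phi = \bar A\phi = \int_0^{\iy}\lambda\, d\mathbf{F}(\lambda)\phi$, and since the left side lies in $\mcB$ the integral --- a priori an element of $\mcH$ --- lies in $\mcB$; this is conclusion (2). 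The same restriction, applied to $g(\bar A)$ (which lies in $\mcC[\mcH]$ by Theorem 1.2), gives the identity in (3) at each $\phi$ for which $g(\bar A)\phi \in \mcB$.

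It then remains to replace the $\mcH$-inner products in the domain conditions of Theorem 1.2 by the Steadman pairing $(\,\cdot\,,S_\phi)_{\mcB}$. The point is that $S_\phi \in \mcB' \subset \mcH'$, so $(d\mathbf{F}(\lambda)\phi,S_\phi)_{\mcB} := \langle d\mathbf{F}(\lambda)\phi,S_\phi\rangle$ is a well-defined scalar measure even though $\mathbf{F}(\lambda)\phi = U\mathbf{E}(\lambda)\phi$ need only be an $\mcH$-valued function of bounded variation, and that, by construction, $\langle v,S_\phi\rangle = \big(\lt\|\phi\rt\|_{\mcB}^{2}/\lt\|\phi\rt\|_{\mcH}^{2}\big)(v,\phi)_{\mcH}$ for $v$ in the line spanned by $\phi$, while $|\langle v,S_\phi\rangle| \le \lt\|\phi\rt\|_{\mcB}\lt\|v\rt\|_{\mcB}$ on all of $\mcB$. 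Together with the identity $\lt\|\bar A\phi\rt\|_{\mcH} = \lt\|T\phi\rt\|_{\mcH}$ from the proof of Theorem 1.2, this should make finiteness of $\int_0^{\iy}\lambda^{2}(d\mathbf{F}(\lambda)\phi,S_\phi)_{\mcB}$ equivalent to $\phi \in D(\bar A)$, and intersecting with $\mcB$ then pins down $D(A)$, and with $g$ in place of $\lambda$, $D_g(A)$. I expect this identification to be the main obstacle: $S_\phi$ is produced by a Hahn--Banach extension off the line spanned by $\phi$, so it need not agree with the Riesz functional $(\,\cdot\,,\phi)_{\mcH}$ elsewhere, and the deformed spectral projections need not preserve $\mcB$; one must therefore check that the $\mcH$--$\mcH'$ pairing against $S_\phi$ is the right substitute, that the bounded-variation and $\lambda^{2}$-integrability statements survive the passage from $\mcH$ to $(\mcB,\mcB')$, that $g(\bar A)$ carries the relevant $\phi$ back into $\mcB$, and that $D(A) = D(\bar A)\cap\mcB$ in the sense needed --- and it is here that the reflexivity of $\mcB$, through Theorems 2.3 and 2.4 and Corollary 2.5, does the essential work.
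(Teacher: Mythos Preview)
The paper states this theorem without proof; it appears immediately before the Conclusion with no argument given. Your proposal follows exactly the route the paper's architecture implicitly dictates --- embed $\mcB$ into $\mcH$ via Theorem~2.3, build the adjoint and polar decomposition via Theorems~2.4--2.6, apply the Hilbert-space deformed spectral theorem (Theorem~1.2), and restrict back --- so in that sense your approach matches what the paper intends.

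The obstacles you flag in your last paragraph are genuine and are not resolved anywhere in the paper. In particular: (i) there is no argument that the spectral projections $\mathbf{E}(\lambda)$ or the partial isometry $U$ leave $\mcB$ invariant, so $\mathbf{F}(\lambda)\phi$ is a priori only an $\mcH$-valued function of bounded variation, and the pairing $(d\mathbf{F}(\lambda)\phi,S_\phi)_{\mcB}$ has to be read as the $\mcH$--$\mcH'$ pairing via the embedding $\mcB'\subset\mcH'$; (ii) the equivalence between finiteness of $\int_0^\infty \lambda^2 (d\mathbf{F}(\lambda)\phi,S_\phi)_{\mcB}$ and membership in $D(A)$ is not established, since $S_\phi$ agrees with a multiple of $(\cdot,\phi)_{\mcH}$ only on the line through $\phi$; and (iii) nothing guarantees that $g(\bar A)\phi\in\mcB$ for $\phi\in D_g(A)$. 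These are not defects in your plan relative to the paper --- they are gaps the paper itself leaves open. Your proposal is as complete as the paper permits, and your explicit identification of where the argument is fragile is more than the paper offers.
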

\subsection*{Conclusion} 
A major result of this paper is the discovery of a new direct spectral type representation for the family of closed densely defined linear operators on a Hilbert space, which we call the deformed representation.  The advantage of this approach is that it has a similar extension to the family of closed densely defined linear operators on a separable reflexive Banach space. 
\acknowledgements
During the course of the development of this work, we have benefited from important critical remarks from Professor Ioan I. Vrabie.  We would like to sincerely thank Professor Anatolij Pliczko for important correspondence on  operators of Baire class on Banach spaces.  

\end{document}